\newif\ifFull
\newtheorem{theorem}{Theorem}
\newtheorem{lemma}[theorem]{Lemma}
\newtheorem{corollary}[theorem]{Corollary}
\newcommand{\remark}[1]{\relax}
\newcommand{\qed}{{\hspace*{\fill}\rule{6pt}{6pt}}}
\newenvironment{proof}{\noindent{\bf Proof:}}{\smallskip}
\def\@begintheorem#1#2{\sl \trivlist \item[\hskip \labelsep{\bf #1\ #2:}]}
\def\@opargbegintheorem#1#2#3{\sl \trivlist
      \item[\hskip \labelsep{\bf #1\ #2\ #3:}]}
\newcommand{\R}{{\bf R}}
\begin{document}

\title{Cloning Voronoi Diagrams via \\ Retroactive Data Structures}
\ifFull\else
\titlerunning{Cloning Voronoi Diagrams}
\fi

\ifFull
\author{
Matthew T.~Dickerson \\
Dept.~of Math and Computer Sci. \\
Middlebury College 
\and
David Eppstein \\
Dept.~of Computer Science \\
Univ.~of California, Irvine 
\and
Michael T.~Goodrich \\
Dept.~of Computer Science \\
Univ.~of California, Irvine
}
\else
\author{
Matthew T.~Dickerson\inst{1}
\and
David Eppstein\inst{2}
\and
Michael T.~Goodrich\inst{2}
}
\authorrunning{Dickerson, Eppstein, Goodrich}
\tocauthor{Dickerson, Eppstein, Goodrich}
\institute{Dept.~of Math and Computer Sci.,
Middlebury College, Middlebury, Vermont, USA
\and
Computer Science Department, University~of California, Irvine, USA}
\fi

\date{}
\maketitle

\begin{abstract}
We address the problem of 
replicating a Voronoi diagram $V(S)$ of a planar point set $S$
by making proximity queries\ifFull,  which are of three possible 
(in decreasing order of information content)\fi: 
\begin{enumerate}
\item
the exact location of the nearest site(s) in $S$
\item
the distance to and label(s) of the nearest site(s) in $S$
\item
a unique label for every nearest site in $S$.
\end{enumerate}
\ifFull
We provide algorithms showing how queries of 
Type 1 and Type 2 allow an exact cloning
of $V(S)$ with $O(n)$ queries and $O(n \log^2 n)$ processing time. 
We also prove that queries
of Type 3 can never \emph{exactly} 
clone $V(S)$, 
but we show that with $O(n \log\frac{1}{\epsilon})$
queries we can construct an $\epsilon$-approximate cloning of $V(S)$.
\fi
In addition to showing the limits of nearest-neighbor database
security, our methods also provide one of the first natural
algorithmic applications of retroactive data structures\ifFull, as several
of our methods critically rely on the use of such structures, one of
which is new to this paper\fi.
\end{abstract}

\ifFull
\thispagestyle{empty}
\clearpage
\fi

\section{Introduction} 

\ifFull
Data protection
is an important and growing concern in the information age.
Many Internet services derive income from online queries
of their databases; hence, they have an economic interest in keeping others
from replicating their data and offering a competing service.
That is, from a technological point of view, an online service's
economic foundation depends on preventing others from cloning
its data.
Unfortunately, each time such a service answers a query, it leaks a little
piece of its database.
Thus, the information security of such a service can be characterized
in terms of the existence of efficient algorithms that can exploit 
the data leakage present in each response to
systematically extract information from that service 
to be able to replicate that service.
\fi

In the \emph{algorithmic data-cloning framework}~\cite{g-magd-09},
a \emph{data querier}, Bob, is
allowed certain types of queries to a data set $S$ that belongs
to a \emph{data owner}, Alice.
Once Alice has determined the kinds of queries that she will
allow, she must correctly answer every valid query from Bob.
\ifFull
Bob has no other way to access $S$ except through these queries.
\fi
The information security question, then,
is to determine 
how many queries and how much processing time is needed for Bob to clone the entire data set.
We define a \emph{full cloning} of $S$ to mean
that Bob can
\ifFull 
replicate Alice's
API so that he can
\fi
answer any validly-formed 
query as accurately as Alice could.
In an \emph{$\epsilon$-approximate cloning} of $S$, 
Bob can answer any validly-formed
query to within an accuracy of $\epsilon>0$.
\ifFull
Thus, from the perspective of algorithmic research,
finding efficient
data-cloning algorithms for Bob demonstrates increased
information-security risks for
Alice, and showing algorithmic lower bounds for
Bob demonstrates improved security for Alice.
\fi

In this paper, we are interested 
in data sets consisting of a set $S$ of $n$ points in the plane, 
where $n$ and the contents of $S$ are initially unknown.
We study the risks to $S$ when Alice supports planar nearest-neighbor queries on $S$.
\ifFull
For example, Alice could represent the web site for 
a large coffee restaurant franchise, which provides the location of
its nearest coffee restaurant given the GPS coordinates of a
querier.
\fi
We assume that all the sites in $S$ are inside a known bounding box, 
$B$, which, without loss of generality, can be 
assumed to be a square with sides normalized to have length $1$. 
Since planar nearest-neighbor queries define a Voronoi diagram in the plane
(e.g., see~\cite{bkos-cgaa-97}), we can view Bob's goal in this instance of the
algorithmic data-cloning framework as that of trying to determine
the Voronoi diagram of $S$ inside the bounding box $B$.
\ifFull
If Alice's API allows Bob to exactly determine the Voronoi diagram of $S$,
then we assume Bob is interested in a full cloning of $S$.
Otherwise, if an exact determination of the set $S$ or 
a representation of the Voronoi diagram is impossible to determine, 
then Bob is interested in an $\epsilon$-approximate cloning of $S$, which, in
this case, would be a determination of 
the Voronoi diagram of $S$ to within a Euclidean distance of $\epsilon>0$. 
Naturally, Bob wishes to clone $S$ minimizing both the number of queries 
made and the running time to process the results of those queries. 

The difficulty of cloning Alice's database depends in part on the information she provides in response to a nearest-neighbor query point~$p$.
\fi
We consider three types of responses
(in decreasing order of information content): 
\begin{enumerate}
\item
the exact location of the nearest site(s) to $p$ in $S$
\item
the distance and label(s) of the nearest site(s) to $p$ in $S$
\item
a unique label identifying each nearest site to $p$ in $S$.
\end{enumerate}
\ifFull
Implicitly,
all three query types request information about a site in $S$ 
whose Voronoi region
the query point $p$ is contained within. 
Query 1, for example, asks for the exactly location and label of
site $q \in S$ whose Voronoi region $V(q)$ contains $p$, or if $p$'s  
nearest neighbor is 
not unique for the Voronoi edge or Voronoi vertex on which $p$ falls. 
\fi
With all three cases, we want to know how 
difficult it is to compute the Voronoi diagram,
or an approximation of it, from a set of queries.

\ifFull
\subsection{Related Work}
Let us briefly review some related prior work.
\fi

\ifFull
Bancilhon and Spyratos~\cite{bs-pirdd-77},
Deutsch and Papakonstantinou~\cite{dp-pidp-05},
and Miklau and Suciu~\cite{ms-faidd-07}
provide general models for
privacy loss in information releases from a database, 
called \emph{query-view security},
which identifies the sensitive information as a
specific secret, $S$.
Attackers are allowed to form legal queries and ask them of the database,
while the database owner tries to protect the information that these
queries leak about the secret $S$.
Note that we are instead considering
a related, but different, quantitative measure, 
where there is no specifically sensitive part of the
data, but the data owner, Alice, is trying to limit releasing 
too much of her data.

There has been considerable recent work on designing 
technological approaches that can help protect the privacy or
intellectual property rights of a database by modifying its content.
For example, several researchers
(e.g., see~\cite{ak-wrd-02,ahk-swrd-03,ahk-wrdfa-03,ak-wrd-02,g-qpwrd-03,%
sv-hcwsd-04,sap-rprd-03,s-rard-07})
have focused on data watermarking, which is a
technique for altering the data to make it easier,
after the fact, to track when someone has stolen information
and published their own database from it.
Alternatively, several
other researchers~\cite{lefevre05,samarati01,samarati98,mw-cok-04,%
afkmptz-at-05,bkbl-ekuct-07,zyw-pekcd-05,ba-dpoka-05} have proposed
\emph{generalization} as a way of
specifying a quantifiable secrecy-preservation requirement for databases.
The generalization approach is to group attribute values into equivalence
classes, and replace each individual attribute value with its class name,
thereby limiting the information that can be derived from any selection query.
Our assumption in this paper, however, is that the data owner, Alice, is not
interested in modifying her data, since she derives an economic interest from
its accuracy.
She may be interested instead in placing a reasonable limit
on the number of queries any one user might ask, so that she can
limit her exposure to the risk of that user cloning her data.

There has been less prior related work that takes the ``black hat'' perspective 
of this paper, which asks the question of how quickly a data set can be
discovered from seemingly minimalistic responses to queries on that object.
For example,
Goodrich~\cite{g-magd-09} studies the 
problem of discovering a DNA string from genomic comparison queries and
Nohl and Evans~\cite{ne-qiltb-06} 
study quantitative measures on the global system
information that is leaked from learning
the contents of multiple RFID tags that are generated by that system.
\fi

Motivated by the problem of having a robot discover the shape of an
object by touching it~\cite{cy-sp-87}, 
there is considerable amount of related work in
the computational geometry literature on discovering polygonal and
polyhedral shapes from probing 
\ifFull
(e.g., see~\cite{aby-nccr-90,aiir-pshlr-93,bs-ppmih-93,by-psncp-92,%
dey-pcp-86,es-pcpxr-88,js-mbpsc-92,rs-tos-95,s-irgp-92,s-pcphp-91}).
Rather than review all of this work in detail, we
\else
(e.g., see~\cite{aby-nccr-90,aiir-pshlr-93,by-psncp-92,%
dey-pcp-86,es-pcpxr-88,js-mbpsc-92,s-irgp-92,s-pcphp-91}).
We
\fi
refer the interested reader to the survey and book 
chapter by Skiena~\cite{s-pgp-89,s-grp-97}, and simply mention
that, with the notable exception of work by 
Dobkin {\it et al.}~\cite{dey-pcp-86},
this prior work
is primarily directed at discovering obstacles in a two-dimensional environment
using various kinds of contact probes.
\ifFull
Thus, although it is related, most of this 
prior work cannot be adapted to the problem of
discovering a Voronoi diagram through nearest-neighbor probes.

By a well-known lifting method
(e.g., see \cite{bkos-cgaa-97}), a 2-dimensional Voronoi diagram can
be defined by a projection to the plane of the skeleton of a 
3-dimensional convex polyhedron, which is defined by the upper
envelope of planes tangent at point sites mapped vertically to a 
certain paraboloid.
This property of Voronoi diagrams implies 
that the method of Dobkin {\it et al.}~\cite{dey-pcp-86} for 
discovering a convex polytope via ray-shooting ``finger''
probes can be used to discover a Voronoi diagram in
the plane using nearest-neighbor queries (for instance, the queries
we call ``exact queries'').
\fi
Translated into this context,
their method results in a scheme that would use $7n-5$ queries to
clone a Voronoi diagram, with a time overhead that is $\Theta(n^2)$.

In the framework of 
\ifFull
\emph{retroactive data structures}~\cite{b-sedop-08,dil-rds-07,gk-odvrs-09},
which are themselves related to
\emph{persistent data structures}~\cite{dr-par-91,dsst-mdsp-89,st-pplup-86},
\else
\emph{retroactive data structures}~\cite{b-sedop-08,dil-rds-07,gk-odvrs-09},
\fi
each update operation $o$ to a data structure $D$, such as an insertion or deletion of an element, comes with a unique numerical value, $t_o$, specifying a
\emph{time value} at which the operation $o$ is assumed to take place.
The order in which operations are presented to the data structure is not assumed to be the same as the order of these time values. 
Just like update operations, query operations also come with time values; a query with time value $t$ should return a correct response with
respect to a data structure on which all operations with $t_o<t$ have been performed.
Thus, an update operation, $o$, having a time value, $t_o$, will affect
any subsequent queries having time values greater than or equal to $t_o$.
In a \emph{partially retroactive} data structure the time for a query must be at least as large as the maximum $t_o$ seen so far, whereas
in a \emph{fully retroactive} 
data structures there is no restriction on the time
values for queries.
\ifFull
Persistent data structures~\cite{dr-par-91,dsst-mdsp-89,st-pplup-86}
also allow for such 
``queries in the past,'' as with fully retroactive data structures, but 
in a persistent structure, $D$, all
updates must either be for the current version of $D$ or 
they must ``fork'' off an alternate version of $D$ from a 
past instance of $D$.

\fi
Demaine {\it et al.}~\cite{dil-rds-07} show
how a general comparison-based ordered dictionary 
(with successor and predecessor queries) of $n$ elements (which may not
belong to a total order, but which can always be compared when they are in
$D$ for the same time value) 
can be made fully retroactive in $O(n\log n)$ space
and $O(\log^2 n)$ query time and amortized $O(\log^2 n)$
update time in the pointer machine model.
Blelloch~\cite{b-sedop-08} and
Giora and Kaplan~\cite{gk-odvrs-09}
improve these bounds, for numerical (totally ordered) items,
showing how to achieve a fully retroactive ordered dictionary 
in $O(n)$ space and $O(\log n)$ query and update times in
the RAM model.
These latter results do not apply to the general comparison-based
partially-ordered setting, however.

\ifFull
\subsection{Our Results}
\else
\paragraph{\bf Our Results.}
\fi
Given a set $S$ of $n$ points in the plane, with an API
that supports nearest-neighbor queries, 
we show how queries of 
Type~1 and Type~2 allow an exact cloning
of the Voronoi diagram, $V(S)$,
of $S$ with $O(n)$ queries and $O(n \log^2 n)$ processing time. 
Our algorithms are based on non-trivial modifications 
of the sweep-line algorithm
of Fortune~\cite{f-savd-87} (see also~\cite{bkos-cgaa-97})
so that it can construct a Voronoi diagram
correctly in $O(n\log^2 n)$ time while
tolerating unbounded amounts of backtracking.
We efficiently accommodate this 
unpredictable backtracking through the use of a fully 
retroactive data structure for general comparison-based dictionaries.
In particular, our method is based on our showing that the 
dynamic point location method of Cheng and Janardan~\cite{cj-nrdpp-92} 
can be adapted
into a method for achieving a general comparison-based fully retroactive ordered
dictionary with $O(n)$ space, $O(\log n)$ 
amortized update times, and $O(\log^2 n)$
query times.
We also provide lower bounds that show that, even with an adaptation of the 
Dobkin {\it et al.}~\cite{dey-pcp-86} approach optimized for nearest-neighbor
searches, there is a sequence of query responses that requires $\Omega(n^2)$
overhead for their approach applied to these types of \emph{exact} queries.
Nevertheless, we show that it is possible to clone
$V(S)$ using only $3n$ queries.
We prove that queries of Type~3 can never exactly 
clone $V(S)$, however, nor even determine 
with certainty the value of $n=|S|$.
Nevertheless, we show that with $n \log(\frac{1}{\epsilon})$
queries we can construct an $\epsilon$-approximate cloning of $V(S)$ 
that will support 
approximate nearest neighbor queries guaranteeing a response 
that is a site within (additive) $\epsilon>0$ distance
of the exact nearest neighbor of the query point.  
\ifFull
Finally, we show that an approach based on using a quadtree to approximate
the Voronoi diagram, which is a natural alternative approach,
in this case results in a suboptimal approximation.
\fi

\ifFull
\section{A Fully-Retroactive General Comparison-Based Ordered Dictionary}
\else
\section{A Fully-Retroactive Ordered Dictionary}
\fi
In this section, we develop a fully retroactive
ordered dictionary data structure using $O(n)$ space, $O(\log n)$ amortized
update time, and $O(\log^2 n)$ query time, based on a dynamic point location method of
Cheng and Janardan~\cite{cj-nrdpp-92}. 
\ifFull
Our data structure only performs comparisons between pairs of items that 
are simultaneously active at some point in the history of 
the retroactive data structure.
\fi
The main idea is to construct an interval tree, $B$,
over the intervals between the insertion and deletion times of each item in the dictionary, and to
maintain $B$ as a BB[$\alpha$]-tree~\cite{m-mdscg-84}.

Each item $x$ is stored at the unique node $v$ in $B$ such that $x$'s insertion
time is associated with $v$'s left subtree and $x$'s deletion time is
associated with $v$'s right subtree.
We store $x$ in two priority search trees~\cite{m-pst-85}, 
$L(v)$ and $R(v)$,  associated with node $v$.
These two priority search trees are both ordered by the dictionary ordering of the items stored in them; all such items are active at  the time value that
separates $v$'s left and right subtrees, so they are all comparable to each other. The priority search trees differ, however, in how they prioritize their items.
As with priority search trees more generally, each node in $L(v)$ and $R(v)$ stores two items, one that is used as a search key and another that has the minimum or maximum priority within its subtree. In $L(v)$, the insertion time of an item is used as a priority, and a node in $L(v)$ stores the item that has the minimum insertion time among all items within the subtree of descendants of that node. In $R(v)$, the deletion time of an item is used as a priority, and a node in $R(v)$ stores the item that has the maximum deletion time within its subtree.

An insertion of an item $x$ in $D$ is done by finding the appropriate node $v$ of the interval tree and inserting $x$ into $L(v)$ and $R(v)$, and a deletion is likewise done by
deletions in $L(v)$ and $R(v)$.
Updates that cause a major imbalance in the interval tree structure are
processed by rebalancing, which implies, by the properties of
BB[$\alpha$]-trees~\cite{m-mdscg-84}, that updates run in $O(\log n)$
amortized time.

Queries are done by searching the interval tree for the nodes with the property that the retroactive time specified as part of the query could be contained within one of the time intervals associated with that node. For each matching interval tree node~$v$, we perform a search in either $L(v)$ or $R(v)$ depending on the relation between the query time and the time that separates the left and right children of $v$. The search method of Cheng and Janardan~\cite{cj-nrdpp-92} allows us to find the successor of the query value, among the nodes stored in $L(v)$ or $R(v)$ with time intervals that contain the query time, in time $O(\log n)$. The result of the overall query is then formulated by comparing the results found at each interval tree node and choosing the one that is closest to the query value. Thus, the query takes $O(\log n)$ time to identify the interval tree nodes associated with the query time, $O(\log^2 n)$ to query each of logarithmically many priority search trees, and $O(\log n)$ time to combine the results, for a total of $O(\log^2 n)$ time.

\begin{theorem}
One can maintain a fully-retroactive general comparison-based dictionary
on $n$ elements, using $O(n)$ space,
so that updates run in $O(\log n)$ amortized time and 
predecessor and successor 
queries run in $O(\log^2 n)$ time.
\end{theorem}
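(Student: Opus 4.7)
The plan is to verify that the construction sketched in the text above the theorem meets all three bounds. First I would establish the space bound: each item $x$ is stored at exactly one interval tree node $v$ and contributes $O(1)$ entries to $L(v)$ and $R(v)$, so the total size of all priority search trees is $O(n)$; the interval tree itself has $O(n)$ nodes, for $O(n)$ overall.

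Next, for updates, I would show that inserting or deleting $x$ requires locating the appropriate node $v$ of $B$ in $O(\log n)$ time, then performing a standard priority-search-tree update in $L(v)$ and $R(v)$, each in $O(\log n)$ worst-case time. The nontrivial ingredient is the BB[$\alpha$] rebalancing: when the weight invariant is violated at some node $u$, we rebuild the subtree rooted at $u$ together with the priority search trees hanging off affected nodes. By the standard BB[$\alpha$] potential argument of Mehlhorn~\cite{m-mdscg-84}, a rebuild at a subtree of size $s$ costs $O(s\log s)$ but is amortized against $\Omega(s)$ prior updates that caused the imbalance, giving $O(\log n)$ amortized update time.

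For queries at a retroactive time $t$ and search value $q$, I would descend $B$ from the root, collecting the $O(\log n)$ nodes whose splitting time is the nearest ancestor of $t$ from each side; these are exactly the nodes whose stored intervals may contain $t$. At each such node $v$, depending on whether $t$ lies left or right of $v$'s splitter, I query $L(v)$ or $R(v)$ using the Cheng--Janardan technique~\cite{cj-nrdpp-92}, which returns in $O(\log n)$ time the successor of $q$ among items whose interval contains $t$. Combining the $O(\log n)$ partial answers with $O(\log n)$ comparisons yields the predecessor/successor overall, for $O(\log^2 n)$ query time.

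The main obstacle — and the reason this construction works in the general comparison-based model, where items need not belong to any global total order — is ensuring that the priority search trees at each node $v$ only ever compare items that were simultaneously active at some real time. This follows from the placement rule: any item stored at $v$ has insertion time in $v$'s left subinterval and deletion time in $v$'s right subinterval, so all items at $v$ are alive at $v$'s splitter time and hence legally comparable there. I expect to handle the same issue during BB[$\alpha$] rebuilds by reconstructing each node's pair of priority search trees locally from the items that end up assigned to that node, never performing a global sort or any cross-node comparison. This preserves the invariant that every comparison made by the data structure could have been made by Alice's dictionary at a single point in time, which is exactly what the general comparison-based model requires.
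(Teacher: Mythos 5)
Your proposal reconstructs exactly the data structure the paper describes immediately before the theorem: a BB[$\alpha$]-balanced interval tree over item lifetimes, with each node carrying two priority search trees ($L(v)$ keyed on insertion time, $R(v)$ on deletion time) ordered by the dictionary order and searched via Cheng--Janardan, with the same space, amortized-update, and query analysis. The one place you go beyond the paper's sketch — spelling out that BB[$\alpha$] rebuilds reconstruct each node's priority search trees locally so that no cross-node comparison is ever made — is a correct and worthwhile clarification of why the general comparison-based model is respected, but it does not constitute a different approach.
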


\section{Exact Query Probes}

We begin our study of Voronoi diagram cloning 
with the strongest sort of queries---Type~1. 
Given a query point $p$, a Type-1 query returns
the site $q$ in $S$ nearest to $p$, that is,
it returns the geometric location of $q$, $p$'s nearest-neighbor in $S$. 
In the event that
$p$ has more than one nearest neighbors in $S$, 
all nearest neighbors are returned.
We show that only $O(n)$ queries and $O(n \log^2 n)$ processing
time is needed to completely clone $V(S)$---which, 
as implied, also means we explicitly
have determined both $S$ and $n$.

\ifFull
In this section, we present an algorithm
that accomplishes this task and we argue that the algorithm
is correct and requires only $O(n \log^2 n)$ time. 
We show that the number of probes needed is less than $4n$.  
\fi

\paragraph{\bf Overview of Our Algorithm.}
Our algorithm is adapted from the plane sweep Voronoi diagram algorithm of
Fortune~\cite{f-savd-87}, with a significant modification 
to allow for unbounded and
unpredictable amounts of backtracking.
The fundamental difference 
is that the Fortune algorithm begins with the set of sites, $S$,
completely known; in our case, the only thing
we know at the start is a bounding box containing $S$. 
Using the formulation of
de~Berg {\it et al.}~\cite{bkos-cgaa-97},
Fortune's algorithm uses an event queue to 
controls a sweep line that moves in order
of decreasing $y$ coordinates, 
with a so-called ``beach line''---an 
$x$-monotone curve made up of parabolic segments
following above the sweep line. 
The plane above the beach line is partitioned into cells 
according to the final Voronoi 
diagram of $S$. 
There are two types of events, caused 
when the sweep line crosses point sites in $S$
and Voronoi vertices in $V(S)$; the latter points
are determined as the algorithm progresses. 
In our version, we need to find both 
the sites and the Voronoi vertices as the plane sweep advances. 
And because not all sites are known in advance,
we will need to verify \emph{tentative} Voronoi vertex events 
as we sweep across them, at times
backtracking our sweep line when our queries reveal new sites 
that invalidate tentative 
Voronoi vertices and introduce new events that are actually 
above our sweep line.
We will show that each query discovers a feature in the Voronoi diagram,
that the number of times we backtrack 
is bounded by the number of these features, and 
these facts imply that
the number of queries and updates we perform in a retroactive 
dictionary used to implement our sweep-line algorithm is $O(n)$.
In fact, we will prove 
that the number of probes is at most $4n$.

We begin with an overview of our algorithm.
The algorithm begins by finding all the Voronoi regions 
and edges that intersect the top edge 
of the bounding box, $B$. 
If there are $k$ such regions (and thus $k-1$ edges), this can 
be accomplished
in $O(n)$ time with $2k-1$ queries. 
This step initializes our event queue with $k$ of the point sites in $S$.

The algorithm then proceeds much as the Fortune algorithm, 
but with the following two important changes.  
Whenever we reach a point site 
event
for some site $q \in S$ (i.e., when $q$ is removed from the event queue), we do a nearest-neighbor query on the point of the 
beach line directly above $q$---that is, the point with the 
same $x$-coordinate as $q$ and a $y$ coordinate
on the beach line that exists for the time value when the sweep line hits $q$. 
The position of this query point can be determined by using a retroactive dictionary
queried with respect to
the components of the beach line for the time value (in the plane sweep)
associated with the point $q$.
(See Fig.~\ref{fig:sweep}.)
Querying this point will either confirm a 
Voronoi edge known to be part of the final Voronoi diagram 
(in which case we proceed with the 
sweep) or it will discover a new site $r$ in $S$ 
(in which case the sweep line restores point
$q$ to the event queue and backtracks to $r$).  

\begin{figure}[b!]
\vspace*{-16pt}
\begin{center}
\includegraphics[width=3.5in]{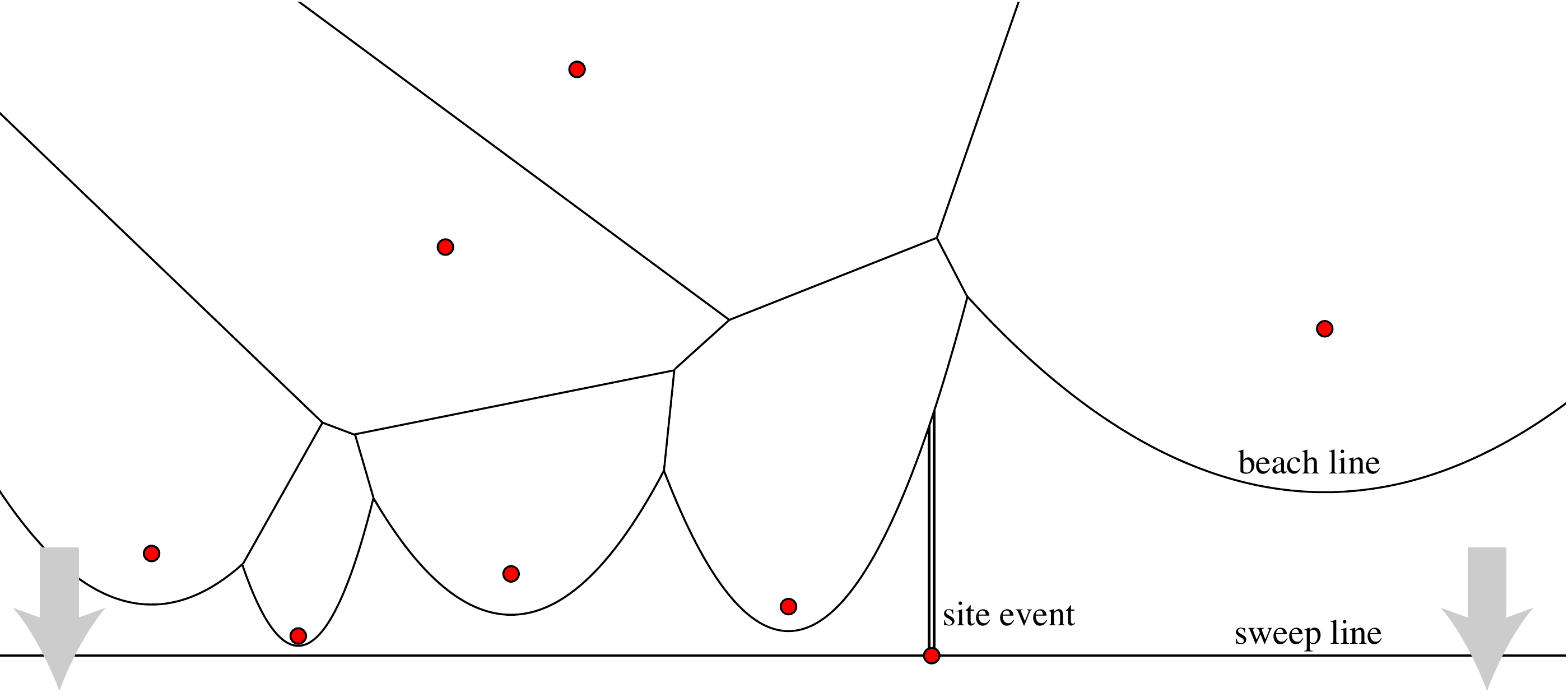}
\end{center}
\vspace*{-12pt}
\caption{Illustrating the sweep-line algorithm for constructing 
a Voronoi diagram.}
\label{fig:sweep}
\end{figure}

The second type of event is a 
tentative Voronoi vertex event.  
We do a nearest-neighbor query at the point believed to be a
Voronoi vertex, which either confirms
the vertex and all of its adjacent Voronoi edges above it, 
or it discovers a new site in $S$.
This new site must be above the sweep line: at the time that Fortune's algorithm processes a Voronoi vertex event its sweep line must be as far from the Voronoi vertex as the three sites generating the vertex, so undiscovered sites below the sweep line cannot be nearest neighbors to the tentative vertex.
If the algorithm discovers a new site above the sweep line, 
then again we backtrack and 
process that new site.
In either case the Voronoi vertex is 
removed from the event queue---either added to the
Voronoi diagram being constructed as a validated vertex, 
or ignored as a false vertex.
\ifFull
We give the details in an appendix.
\fi

\paragraph{\bf Correctness and Complexity.}
Both the correctness and the analysis of this algorithm
make use of the following important
observation. Though the algorithm backtracks 
at certain ``false'' events---or tentative
events that are proven false---it never 
completely removes any Voronoi components that have been confirmed by probes.
 Voronoi edges can only be added in
two ways: the addition of a new site that creates one new edge, 
or the addition of a Voronoi 
vertex that terminates two edges and creates one new edge. 
In both cases, the edge
is verified as an actual edge using a query before it is 
added to the Voronoi diagram
being constructed, thus the diagram never contains edges that could 
later be falsified.
(See Fig.~\ref{fig:backtrack}.)

\begin{figure}[b!]
\begin{center}
\includegraphics[width=3.5in]{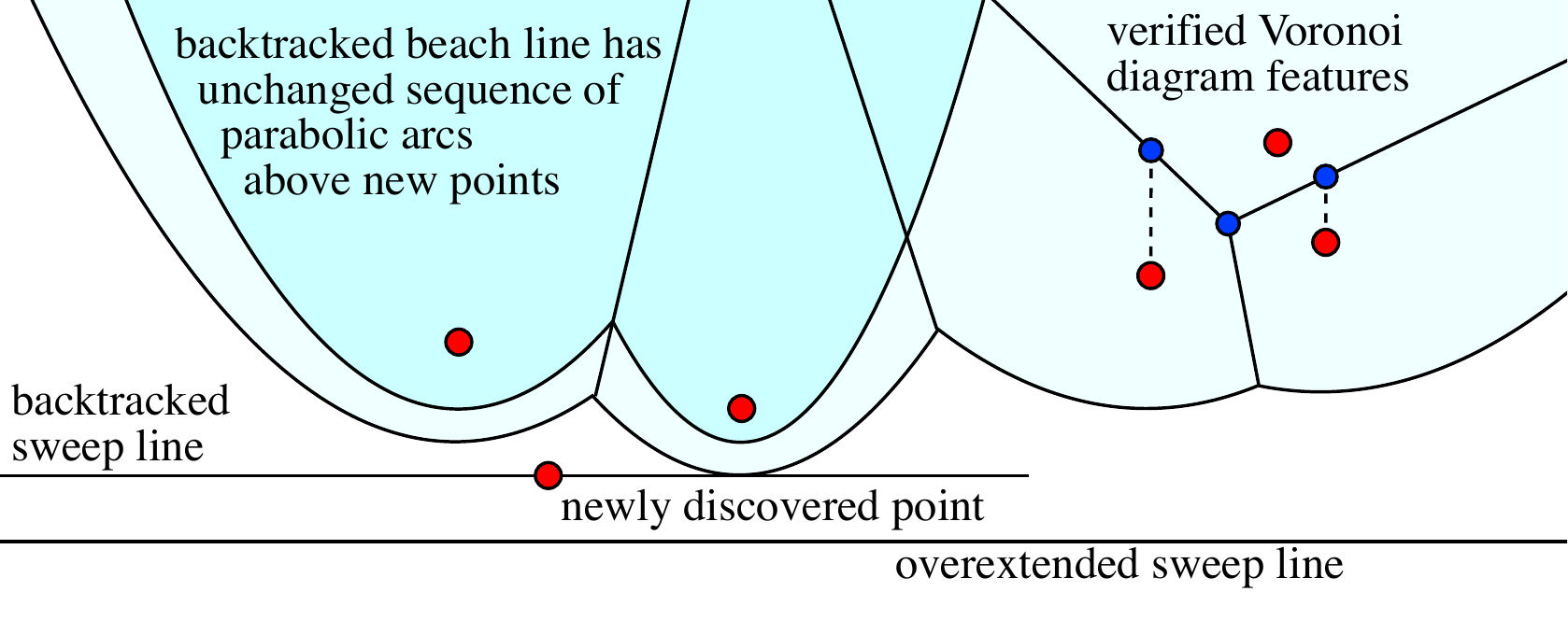}
\end{center}
\vspace*{-12pt}
\caption{Backtracking the sweep-line.}
\label{fig:backtrack}
\end{figure}

The insertion of a new site begins a new edge directly above it, 
where the parabola
of the site being added to the tree $T$---a degenerate line-segment 
parabola at the instant it
is added---intersects the existing parabola above it, 
thus replacing one leaf in the tree with
three.  
But before this site is inserted with its edge, the edge is 
tested with a query into
the existing Voronoi diagram.  
The other time an edge may be added is at
a Voronoi vertex where two existing edges meet and 
a third new one is created. But all
tentative Voronoi vertices are also verified by queries 
before they become circle events.

\ifFull
Because of these verifying queries, the correctness of the algorithm follows from the correctness 
of the original Fortune algorithm and the fact that every tentative Voronoi vertex is verified by a 
query before being added to the diagram. Verifying every vertex in turn implies that every site in 
$S$ is discovered. In particular, suppose we constructed a Voronoi diagram $V(S)$ for some set 
$S'$ that was missing at least one site from $S$. (Since queries never return non-existing sites, 
$S'$ cannot have any additional sites not in $S$.) Let $p \in S$ be any such site missing from 
$S'$. At least one Voronoi vertex in $V(S')$ must be closer to $p$ than to any site in $S'$. 
This follows, for example, from the fact that every Delaunay triangle has an empty 
circumcircle corresponding to a Voronoi vertex, so at least the Delaunay triangle containing 
$p$ would have a corresponding Voronoi vertex closer to $p$.  And so a query on that vertex 
would return $p$ as the closest site.
\fi

There are a few key observations that will lead to the analysis of the algorithm's run time
and total number of queries. First, the sweep line will only backtrack when a new site in 
$S$ is discovered, and so there are at most $n$ backtracks. Second, every time we have a tentative 
Voronoi vertex that turns out to be unverified---that is, an event that turns out not to be part of the 
final diagram---we have also discovered a new site in $S$, and thus we have at most $O(n)$ 
phony events that are processed. It follows that the run time of the algorithm is asymptotically 
equivalent to the original Fortune algorithm, modulo the time needed for our retroactive data structure queries.
Furthermore, after our initialization stage, every nearest-neighbor query either finds a site, verifies a site by
looking at the Voronoi edge above it, or verifies a Voronoi vertex. The initialization, as noted,
requires $2k-1$ queries if there are $k$ sites initially discovered. Queries discover $n-k$ more sites,
verify $n$ sites, and find at most $2n-2-k$ Voronoi vertices, for a total of $4n-3$ which is less
than $4n$ queries. 

The algorithm requires the same run time as the original Fortune plane sweep except
for the processing of the tentative Voronoi vertices that prove to be phony,
and all the backtracking (which is implemented using our retroactive
dictionary). As noted, there are
$O(n)$ of these backtracking steps, 
since these can only occur once for each previously undetected site in $S$.
So the overall number of updates and queries in our
sweep-line-with-backtracking algorithm is $O(n)$; hence,
the running time of our algorithm is $O(n \log^2 n)$.
Thus, we have the following.

\begin{theorem}
Given a set $S$ of $n$ points in $\R^2$, we can construct a copy of
$V(S)$ using at most $4n$ 
Type-1 queries and $O(n\log^2 n)$ time.
\end{theorem}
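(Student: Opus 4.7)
The plan is to adapt Fortune's plane-sweep algorithm so that, instead of requiring the point set $S$ to be known in advance, it discovers $S$ incrementally through Type-1 probes while still producing a correct copy of $V(S)$ at termination. The key change is that Fortune's algorithm normally pre-loads all site events; here the only initial data is the bounding box $B$, so I first initialize the event queue and the beach line by probing along the top edge of $B$. If there are $k$ Voronoi regions meeting this top edge, a binary-search-style sweep of probes along the top identifies all $k$ sites and the $k-1$ separating edges using $2k-1$ probes, seeding the queue with the topmost sites.

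Next, I would replace each of Fortune's two event types with a probed variant. For a site event at a newly-dequeued site $q$, I issue one Type-1 query at the point of the beach line directly above $q$: if the answer is a site already in my diagram, the edge spawned above $q$ is confirmed and the sweep continues; if the answer is some new site $r$ above the sweep line, I reinsert $q$ into the queue and backtrack the sweep up to $r$. For a tentative circle (Voronoi vertex) event, I probe at the tentative vertex; either the vertex is confirmed (and its incident upward edges are certified), or the probe returns a new site, which again must lie above the sweep line, triggering a backtrack. In either case the tentative vertex leaves the queue. Crucially, edges and vertices are added to the output diagram only after their defining probe succeeds, so no already-committed feature is ever invalidated by later backtracking; correctness then follows from Fortune's correctness together with the observation that every missing site would create a Voronoi vertex in the current partial diagram closer to the missing site than to any discovered site, and the probe at that vertex would have exposed it.

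To bound the probes, I would charge each probe to a distinct feature of $V(S)$: initialization costs $2k-1$ and discovers $k$ sites; each of the remaining $n-k$ sites is discovered by a probe that triggered a backtrack; each discovered site is then verified above by exactly one site-event probe (accounting for $n$ more); and each true Voronoi vertex is validated by one vertex-event probe ($V(S)$ has at most $2n-2-k$ bounded vertices relative to the initialization). Summing gives $(2k-1)+(n-k)+n+(2n-2-k)=4n-3<4n$, and a tentative vertex that fails can be charged to the new site it reveals, so phony events number $O(n)$ as well.

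The main obstacle, and the reason the argument is nontrivial, is supporting unbounded backtracking of the sweep efficiently: when the sweep jumps upward by discovering a new site, the beach-line dictionary must be consistent with that earlier moment in time and all subsequent updates must be re-executed in the correct temporal order. I would resolve this by representing the beach line in the fully-retroactive comparison-based ordered dictionary of the theorem in Section~2, using the $y$-coordinate of the sweep line as the time parameter: beach-line arc insertions and deletions become retroactive updates, and the ``arc directly above $q$'' lookup becomes a retroactive predecessor query at time $y_q$. Each such operation costs $O(\log n)$ amortized (updates) or $O(\log^2 n)$ (queries); since the total number of retroactive operations is $O(n)$ by the charging argument above, the overall running time is $O(n\log^2 n)$, completing the theorem.
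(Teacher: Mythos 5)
Your proposal is correct and follows essentially the same approach as the paper: adapting Fortune's sweep with probe-verified site and vertex events, backtracking on newly discovered sites, using the fully-retroactive dictionary of Section~2 to maintain the beach line across backtracks, and the identical charging argument yielding $(2k-1)+(n-k)+n+(2n-2-k)=4n-3<4n$ queries and $O(n\log^2 n)$ time.
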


\ifFull
\subsection{An Alternate Algorithm Using More Time and Fewer Queries}
It is possible to further limit the 
number of queries at the cost of increased worst-case computational
complexity.  
If $n>1$, we can also exactly clone $V(S)$ 
with at most $3n$ queries if we are willing to
spend worst case $O(n^{2})$ time implementing this alternative.

\else
\paragraph{\bf An Alternate Algorithm Using More Time and Fewer Queries.}
\fi
This second algorithm follows an incremental construction paradigm,  
based on the general approach 
of Dobkin {\it et al.}~\cite{dey-pcp-86} for discovering a
3-dimensional convex polyhedron using finger probes.
The alternative algorithm begins by querying each of 
the four corners of the bounding box.  
There are three cases to consider:
these probes may discover one, two, or more than two
sites in $S$.  
If we discover more than two sites, then we construct the Voronoi diagram
of all $3$ or $4$ of the sites discovered by these four queries, 
but we mark each Voronoi
vertex as tentative and put it into a queue. 
The algorithm then proceeds as follows until the queue
is empty. Remove a tentative Voronoi vertex 
from the queue, and query it. If the query
reveals that it is a Voronoi vertex---that is, 
it has the three expected nearest neighbors---then we
confirm the vertex and continue. 
If it is not a Voronoi vertex, then it must be closer to some
previously undiscovered site in $S$, 
that will be returned by the query. We add that site
to our list of known sites and update the 
Voronoi diagram in worst case $O(n)$ time
using incremental insertion. When the queue is 
empty, we have a complete Voronoi diagram.
Every probe except possibly one of the 
four corner probes discovered either a new site
in $S$ or confirmed a Voronoi vertex, 
and so the total number of queries is at most $n+(2n-5)+1 <  3n$.
If the four corner queries discovered only two sites, 
then we compute the Voronoi edge
that would be shared by these two sites 
if they were the only two sites in $S$, and we
query both intersections of this edge with the bounding box.  
If we confirm both edges,
then $n=2$ and we are done. 
We have used $3n=6$ queries. If at least one of these
two additional queries discovers another site, 
then we have at least three known sites 
and we proceed as with the previous case. 
Every query except at most three of the initial
queries either confirmed a Voronoi vertex 
or discovered a new point site, and so the
total number of queries is at most $n+(2n-5)+3 < 3n$.
All four corners will belong to the same 
Voronoi region if and only if there is only one site
in the bounding box, in which case $4$ queries was sufficient.
\ifFull
Thus,
we have the following.
\fi

\begin{theorem}
\label{thm:type-one}
Given a set $S$ of $n$ points in $\R^2$, we can construct a copy of
$V(S)$ using at most $3n+1$ Type-1 queries and $\Theta(n^2)$ time.
\end{theorem}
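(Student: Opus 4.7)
The plan is to verify the algorithm sketched just above the theorem statement by establishing three things: termination, correctness (the diagram output really is $V(S)$), and the claimed query/time bounds. I would organize the argument around a loop invariant: at every moment, the algorithm maintains a set $S' \subseteq S$ of discovered sites, a fully constructed Voronoi diagram $V(S')$ of them inside $B$, and a queue of tentative Voronoi vertices of $V(S')$ that have not yet been verified; every vertex not in the queue has been confirmed by a Type-1 query to have $S \cap \{\text{its three generators in }S'\}$ equal to its true nearest-neighbor set. The initialization step (four corner queries plus, in the two-site case, two queries along the perpendicular bisector's intersections with $\partial B$) is used to set up this invariant with $|S'|\ge 3$ unless $n=1$.

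For correctness, the main claim is that when the queue empties we have $S' = S$. I would prove the contrapositive: if some $p\in S\setminus S'$ exists, then the Delaunay triangle of $S'$ whose circumscribed disk contains $p$ yields a Voronoi vertex $v$ of $V(S')$ for which $p$ lies strictly inside the circle through $v$'s three generators; hence $v$ is closer to $p$ than to any site in $S'$, so the query at $v$ that supposedly confirmed it would instead have returned $p$, a contradiction. A small amount of care is needed at the boundary of $B$ (where "vertices" include intersections of bisectors with $\partial B$), which I would handle by treating these as tentative and querying them in the same way. Since $V(S')=V(S)$ at termination and the algorithm visibly terminates (each non-confirming query strictly enlarges $S'$ and $|S|$ is finite), correctness follows.

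For the query count, I would do the case analysis exactly as outlined: the four corner queries either reveal $1$, $2$, or $\ge 3$ sites; in the $\ge 3$ case every subsequent query either confirms a Voronoi vertex or discovers a new site, giving at most $(n-k) + (2n-5)$ additional queries for $k\in\{3,4\}$ initially discovered sites, and a total bounded by $3n$; in the two-site case the two extra boundary queries add $2$, again giving at most $3n$; the $n=1$ case uses $4$ queries which is $3n+1$, and this is the worst case that forces the "$+1$" in the theorem statement. The Euler-relation bound $|V|\le 2n-5$ for the interior Voronoi vertices of $n\ge 3$ sites is the combinatorial input that makes this work, and I would state it as a lemma with the standard derivation from the Delaunay triangulation. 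The $\Theta(n^2)$ time bound then follows because there are $O(n)$ incremental insertions into the Voronoi diagram, each done in worst-case $O(n)$ time by the standard incremental-construction routine, and the corresponding lower bound is witnessed by inputs in convex position in which every insertion truly modifies $\Theta(n)$ cells. The main obstacle in carrying this out cleanly is getting the boundary cases of the case analysis (the $n=1$ and $n=2$ subcases, and tentative vertices that lie on $\partial B$) consistent with the invariant, so that the final count is tight and the "closer undiscovered site" argument still applies.
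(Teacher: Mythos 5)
Your upper-bound and correctness arguments are sound and essentially match the paper's intent: the loop invariant, the Delaunay-circumcircle contradiction showing $S'=S$ at termination, and the query-count case analysis (with $n=1$ forcing the $+1$) all go through. This part is fine.

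The gap is in the $\Omega(n^2)$ half of the $\Theta(n^2)$ claim. Writing that ``the corresponding lower bound is witnessed by inputs in convex position in which every insertion truly modifies $\Theta(n)$ cells'' asserts the conclusion rather than proving it. For points in convex position the cost of incremental insertion depends entirely on the \emph{order} of insertion (a random order gives $O(n\log n)$ total, and many deterministic orders are cheap too), and here you do not get to choose that order: it is dictated by which tentative Voronoi vertex the algorithm dequeues and which hidden site the probe at that vertex happens to return. So a valid lower bound must exhibit a specific point set for which the algorithm is \emph{forced} to discover sites in a sequence where each discovery triggers $\Omega(n)$ restructuring. The paper's construction does exactly this: $n/2$ sites clustered within $\delta\le 1/2^n$ of $(0,0)$ on the bottom edge of $B$, plus a chain $p_0=(\delta,1-\delta)$, $p_1=(\delta,1/2)$, $p_2=(\delta,1/4),\ldots$ arranged so that after the boundary cluster is found, each tentative-vertex probe reveals the next $p_i$, and each $p_i$ is closer to every clustered site than $p_{i-1}$ was, so its insertion reworks $\Omega(n)$ cells; summing over $n/2$ such discoveries gives $\Omega(n^2)$. (Note this set is nowhere near convex position --- the $p_i$ are collinear.) Without a construction of this kind tied to the algorithm's probing behavior, your lower bound is a plausible-sounding claim, not a proof.
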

\begin{proof}
We have already established the quadratic upper bound.
For the sake of a lower bound, imagine that we have a set $S'$ of 
$n/2$ points on the bottom boundary of $B$, all within distance $\delta$
of the point $(0,0)$, for a small parameter $\delta$ with
$0<\delta\le 1/2^n$. These
points, by themselves,
construct a Voronoi diagram with parallel edges. Suppose further
that there is a single point, $p_0=(\delta,1-\delta)$, 
near the top boundary of $B$. 
The Voronoi region for $p_0$ intersects the Voronoi region of every point in
$S'$.
The above algorithm therefore, after discovering the boundary
points in $S'$, would next
query a vertex on the Voronoi diagram $V(S')$ of $\cal B$, which will
discover $p_0$. 
Next it will probe at vertex that is equi-distant to $p_0$ and a
point in $S'$.
Suppose that this probe discovers a point $p_1=(\delta,1/2)$.
This point is closer to every point in $S'$ than $p_0$; hence,
updating the Voronoi diagram to go from
$V(S'\cup \{p_0\})$ to
$V(S'\cup \{p_0,p_1\})$ takes $\Omega(n)$ time.
Now, suppose querying a tentative vertex of
$V(S'\cup \{p_0,p_1\})$, which will be equi-distant from $p_1$ and a
point in $S'$, discovers a point $p_2=(\delta,1/2^2)$.
Again, updating the Voronoi diagram takes $\Omega(n)$ time.
Suppose, therefore, that
we continue in this way, with each newly-discovered point
$p_i=(\delta,1/2^i)$ requiring that we spend $\Omega(n)$ time to
update the current Voronoi diagram.
After discovering $p_{n/2-1}$, the $n$-th point in the set 
$S=S'\cup \{p_0,p_1,\ldots,p_{n/2-1}\}$,
we will have spent $\Omega(n^2)$ time in total to
discover the Voronoi diagram $V(S)$.
\qed
\end{proof}

\section{Distance Query Probes}
We next consider our Voronoi diagram cloning algorithm for the case when we use
use only distance query probes---probes
that return the distance to and label of the nearest site(s).
\ifFull
Our method follows a similar approach to the backtracking method we used above.
The main difference is that we require extra work to determine the exact location of the 
sites in $S$.

\fi
We begin by describing how we can find those sites whose Voronoi regions (and thus also edges)
intersect the top boundary of the bounding box where the sweep-line begins. We will speak of
a \emph{probe circle} as the set of possible locations of a site returned by a probe $p$: it is
the circle of radius $d$ centered at $p$, where $d$ is the distance returned to the nearest site.

\paragraph{\bf Initializing the Sweep Line.}
We begin the initialization process by probing at the two top corners of the bounding box.
If both probes return the same site $p$, then by convexity of Voronoi regions the entire
top edge of the box belongs to the Voronoi region $V(p)$. Furthermore, both probes
also return a distance $d$  to the site $p$, and so $p$ must fall on an intersection of two circles
of radius $d$ centered at the two corner probe locations. Since one of these intersections
is above the bounding box, the remaining intersection gives the exact location of $p$.

Assume that the two corner probes $p_{l}$ and $p_{r}$ on the left and right respectively return 
different sites $q_{l}$ and $q_{R}$ respectively.  We know the distance $p_{l}$ to $q_{l}$ and 
$p_{R}$ to $q_{R}$, but we don't know the exact locations of the two sites, nor do we know if there 
are any other sites with regions intersecting the bounding box. The segment between these two 
probes is  therefore not fully classified. We will describe a recursive procedure for classification.

Let $L$ be an unclassified segment, with the probes $p_{l}$ and $p_{r}$ on the left and right 
sides of the segments respectively returning different sites $q_{l}$ and $q_{r}$.
First, we probe the midpoint $p_{m}$ of $L$. The probe returns either one of the two known sites,
or a new site $q_{m}$.  If it returns a new site, then we divide $L$ into two segments that
are both unclassified, but which have classified endpoints $q_{l}$, $q_{m}$, and $q_{r}$, 
and we recursively classify them.

Suppose query $p_{m}$ returns one of the already known sites. If this probe returns 
$q_{r}$, then we can
immediately compute the exact location of $q_{r}$ from the two probes $p_{r}$ and $p_{m}$, 
since only one of the intersections of the probe circles is inside the bounding box. We also 
have classified the segment $L_{r}$ between probes $p_{r}$ and $p_{m}$ as being fully inside 
the Voronoi region of $p_{r}$. 
The other half of the segment, $L_{l}$, however, is not classified; we only know the Voronoi
regions of its endpoints are the regions of $q_{r}$ and $q_{l}$. Here it would be tempting to 
again probe the midpoint of the segment $L_{l}$;  however that could lead to an unbounded 
number of probes as we repeatedly divide the segment in half because the midpoint of the
 remaining unclassified segment $L_{l}$ could still belong to $q_{r}$ and so we would gain
 no new information about $q_{l}$. What we do instead
 is use the known location of $q_{r}$, which must be outside the probe circle at
 $p_{l}$, to find a probe location $p_{l2}$ close enough to $p_{l}$ that
 it is guaranteed not to gives us $q_{r}$. This is possible since the perpendicular bisector between 
 $q_{r}$ (which is a known point) and any point on the probe circle from $p_{l}$---which
 is the set of candidate locations for $q_{l}$---must fall between $p_{m}$ and $p_{l}$ on
  a finite segment computable in $O(1)$ time, and thus
 anything between that range and $p_{l}$ is closer to $q_{l}$ than to $q_{r}$. 
 
So this new probe $p_{l2}$ returns either $q_{l}$ or a new site.  If it returns a new site, then we divide 
the unclassified segment into two unclassified segments and recursively classify them. 
If this new probe gives us $q_{l}$ then we now know the exact location of $q_{l}$ from two probes. 
From the exact locations of $q_{r}$ and $q_{l}$, we can compute and probe
 where their Voronoi edge ought to cross the bounding box, either confirming that 
 Voronoi edge---which means that the entire edge is now classified---or we discover a new
 site. If the probe gives us a new site, then again we divide the unclassified segment into 
 two unclassified segments and recursively classify them.
 
 \begin{lemma} The initialization stage for the sweep line requires $O(k)$ time and $3k-1$ probes 
 where $k$ is the number of sites whose Voronoi regions intersect the top of the bounding box.
 \label{lem:init-dist}
  \end{lemma}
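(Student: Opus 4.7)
The plan is to establish the probe count by a charging argument and the time bound by a recursion-tree analysis. I would classify every probe the algorithm issues into one of three disjoint categories: a \emph{discovery} probe returns a site not seen before; a \emph{localization} probe returns an already-discovered but not-yet-localized site, thereby pinning down its exact position (only one of the two probe-circle intersections lies inside $B$); and a \emph{confirmation} probe placed at the computed intersection of the Voronoi edge between two adjacent localized sites with the top boundary verifies that edge.

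I would argue by induction on the recursion that every probe lies in exactly one of these categories and that no probe is wasted. The two corner probes are two discoveries in the generic case; if they return the same site, one serves as discovery and the other as localization, giving the degenerate $k=1$ base case with $2=3\cdot 1-1$ probes. For each recursive call on an unclassified segment, the algorithm probes at a point guaranteed to lie in the Voronoi region of an unlocalized endpoint site: the segment midpoint when both endpoints are still unlocalized, and a $p_{l2}$-style point (placed close enough to the unlocalized endpoint to avoid the half-plane belonging to any already-localized site on the segment) when one endpoint is already localized. Such a probe is either a discovery that forces a further split or a localization of the targeted endpoint. Once both endpoints of a segment are localized, a single probe at the computed edge-crossing is either a confirmation that terminates the branch or a discovery that forces another split.

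Summing over the top boundary, each of the $k$ regions contributes exactly one discovery and one localization, and each of the $k-1$ Voronoi edges between consecutive regions contributes one confirmation, for a total of $2k+(k-1)=3k-1$ probes. I expect the main obstacle to be verifying that the $p_{l2}$-style placement is always well-defined, unwasted, and constant-time. I would justify this via the geometric observation sketched in the algorithm description: for any already-localized site $q^\star$ on the segment, the perpendicular bisector between $q^\star$ and any candidate location of the target site on its probe circle meets the segment strictly between the last probe and the unlocalized endpoint, because the original probe location is itself strictly closer to its returned site than to $q^\star$ (otherwise the earlier probe would have returned $q^\star$ as well), so a neighborhood of that endpoint is safe. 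This crossing is computable in $O(1)$ time from quantities already in hand.

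For the time bound, each recursive call performs $O(1)$ arithmetic work (circle--circle intersections, perpendicular bisector computations, and line intersections) and issues a constant number of probes. The recursion tree has one internal node per site discovery and one leaf per confirmed edge-crossing, so it has $O(k)$ nodes total, yielding an $O(k)$ running time overall.
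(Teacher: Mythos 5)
Your proof is correct and takes essentially the same approach as the paper: it classifies every probe as a discovery, a localization, or a confirmation of a Voronoi edge, and charges $k + k + (k-1) = 3k-1$. The paper's proof is just this counting argument stated tersely; you additionally flesh out the recursion-tree structure, the $k=1$ base case, and the geometric justification for the $p_{l2}$ placement, all of which are consistent with (and implicit in) the paper's algorithm description preceding the lemma.
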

 
\begin{proof}
 Note that every probe either identifies a previously undiscovered site ($k$ probes), provides
a second probe with more information on an already discovered site enabling the exact location 
of this site to be computed ($k$ probes), or confirms a Voronoi edge ($k-1$ probes for $k$ 
regions). So the total number of probes in this section is $3k-1$ where $k$ is the number of sites 
whose Voronoi regions intersect the top of the bounding box.  Each probe is processed in
$O(1)$ time.
\qed
\end{proof}
 
\paragraph{\bf Processing the Sweep Line.}
The previous subsection explains how to initialize the sweep line. The algorithm making use
of distance-only queries now proceeds as with the exact query probe version of the previous
section, except that a slightly different approach requiring more probes will be needed to
process tentative site events.

As with the algorithm of the previous section, there are two types of tentative events: a tentative
Voronoi vertex for three known sites, and a tentative Voronoi edge that falls directly above a known 
site and is determined by one other known site.
Both of these events need to be verified by probe--that is, we need to determine if these events
are actually real, or whether there is some other site closer to the events. 
In both cases, we use a probe $p$ where the tentative
Voronoi feature \emph{should} be. If that problem returns the correct three or two site labels 
(at the correct distance), then the verification is complete, and we proceed as with the
algorithm of the previous section.

However, these probes may discover a new site $q$; in this case, they give only the distance to that site and not its actually location. We
need two more probes that return the same site in order to discover its exact location---but 
these probes may instead return yet other new sites. We now describe how to choose the 
locations of these probes so that no work is wasted, and each probe either
verifies a Voronoi vertex, verifies a Voronoi edge above a known site, or is one of  three
probes that exactly locates a site.

Let $p_{1}$ be a probe during the sweep line, that attempts to verify a Voronoi vertex or edge,
and instead discovers a new site $q_{1}$ that was not previously known.  Let $d=d(p_{1},q_{1})$ be 
the distance returned from probe $p_{1}$ to its site $q_{1}$, and let $e$ be the distance from
$p_{1}$ to the nearest previously known sites---that is, the two or three sites whose tentative
Voronoi vertex or edge it was seeking to verify.  Since probe $p_{1}$ returned $q_{1}$, we know
that $d < e$. Let $p_{2}$ be any probe location such that $d(p_{1},p_{2}) <  \frac{e - d}{2}$.
By the triangle inequality,  we know $d(p_{2},q_{1}) < d +  \frac{e - d}{2} = \frac{e + d}{2}$,
while $d(p_{2},r)>e-\frac{e - d}{2}= \frac{e + d}{2}$, where $r$ is any of the two or three previously 
known closest sites to $p_{1}$. It follows immediately that probe $p_{2}$ cannot return any 
previously known site except $q_{1}$ which was first discovered by probe $p_{1}$. We can 
choose any probe location meeting this restriction, $d(p_{1},p_{2}) <  \frac{e - d}{2}$, which
is computable in $O(1)$ time.

So there are two possibilities with probe $p_{2}$: either it returns site $q_{1}$ again, or
it returns a new site $q_{2}$.  If $p_{2}$ returns $q_{1}$, we now have two probes returning
that site, and distances to that site, so its location is one of at most two intersections between the
two probe circles.  We can now probe either one of those two intersections, and from the
result we determine the exact location of $q_{1}$ because the probe either returns $q_{1}$
at distance $0$, or it returns some other site, or it returns $q_{1}$ at a distance $>0$.

If $p_{2}$ returns a new site $q_{2}$, then we now have two sites that have been discovered,
but whose exact locations are not known.  We can discover the exact location using the 
recursive method of the previous subsection, treating the segment $p_{1}p_{2}$ as an unclassified 
segment, probing its midpoint, and continuing. However, once we have 
received a site as the result of two probes, we still require a third probe to exactly locate it
since both intersections of the first two probe circles might be inside the bounding box.

 \begin{lemma} Processing the remaining events (after the initialization) for the sweep line 
 requires at most $6n-3k-5$ probes 
 where $k$ is the number of sites whose Voronoi regions intersect the top of the bounding box.
 \label{lem:sweep-dist}
 \end{lemma}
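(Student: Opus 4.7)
The plan is to partition every probe made during the sweep phase into three disjoint and exhaustive categories and bound each:
(a) probes that successfully verify a Voronoi edge directly above a known site,
(b) probes that successfully verify a Voronoi vertex of the final diagram, and
(c) probes that participate in the three-probe sequence used to discover and exactly locate one of the $n-k$ new sites. Every probe falls into one of these categories: a verification probe that fails must have revealed a new site and is reassigned to category (c) as the first probe of that site's three-probe locating triple.

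For the individual bounds, category (a) contributes at most $n$ probes, since each of the $n$ sites is processed as a site event exactly once and has at most one successful verification probe on the beach line above it. Category (b) contributes at most $2n-5$ probes, because the final Voronoi diagram of $n$ points in general position contains at most $2n-5$ vertices and each is confirmed by a single circle-event probe. Category (c) contributes exactly $3(n-k)$ probes: for each new site, the failed verification probe $p_1$ that discovered it, the nearby probe $p_2$ chosen within distance $(e-d)/2$ of $p_1$ so that it cannot return any previously known site, and a third resolving probe placed at one of the two candidate intersection points of the two probe circles. Summing yields $n + (2n-5) + 3(n-k) = 6n - 3k - 5$.

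The hard part is justifying the $3(n-k)$ bound on category (c) in the cascading case, where the second probe $p_2$ returns a fresh site $q_2$ rather than reconfirming $q_1$. Here the algorithm invokes the recursive segment-classification procedure from the initialization, and I will argue inductively on the depth of the resulting cascade, in essentially the same style as in Lemma~\ref{lem:init-dist}, that each new site is charged exactly three probes and no probe is double-counted. The triangle-inequality choice of each interior probe guarantees that no probe returns a previously known site, so every probe either advances the location of its own newly discovered site or starts the locating triple for a brand-new one. Combined with the disjointness and exhaustiveness of the three categories, this yields the claimed bound of $6n - 3k - 5$.
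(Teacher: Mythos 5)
Your proof is correct and follows essentially the same approach as the paper's: partition the sweep-phase probes into the same three categories (edge verifications, vertex verifications, and three-probe site-locating sequences), bound them by $n$, $2n-5$, and $3(n-k)$ respectively, and sum. The paper's own proof is just this counting argument stated tersely; your added discussion of how failed verification probes are reassigned as the first element of a locating triple and how the cascading case is handled inductively makes explicit what the paper leaves implicit, but it is the same decomposition, not a different route.
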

 
\begin{proof}
There are $n-k$ sites to be discovered, $n$ sites that need to have an edge verified above
them, and at most $2n-5$ Voronoi vertices in the Voronoi diagram of $n$ sites.
Every probe accomplishes one of five things: it verifies a Voronoi vertex ($2n-5$ probes),
verifies a Voronoi edge directly above a site ($n$ probes), or is one of exactly three 
probes used to discover and then exactly locate a new site ($3(n-k)$ probes.)  The
total number of probes required is therefore at most $6n-3k-5$.
\qed
\end{proof}

\ifFull
\paragraph{\bf Correctness and Analysis.}
The correctness of the algorithm follows from the same argument as in the previous section,
as long as we know that three probes suffice to determine the exactly location of a site.
The analysis of the run-time is straightforward.
\fi

 \begin{theorem} 
Given a set $S$ of $n$ points in $\R^2$, we can construct a copy of
$V(S)$ using at most $6n-6$ Type-2 queries and $O(n\log^2 n)$ time.
  \end{theorem}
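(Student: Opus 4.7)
The plan is to combine the two lemmas for the initialization and sweep-line stages with a runtime analysis that closely parallels the Type-1 case. Lemma~\ref{lem:init-dist} charges at most $3k-1$ probes to initialization when $k$ sites have regions meeting the top edge of $B$, and Lemma~\ref{lem:sweep-dist} charges at most $6n-3k-5$ probes to processing the remaining events. Summing,
\[
(3k-1) + (6n-3k-5) \;=\; 6n-6,
\]
which is independent of $k$ and gives the claimed probe bound. So the first step of the proof is simply to observe this cancellation, noting that the two stages partition the set of all probes made by the algorithm.

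For correctness, I would argue exactly as in the Type-1 case: the algorithm never commits a Voronoi edge or vertex to the constructed diagram without a probe confirming it, so any ``false'' tentative event can be discarded safely, and any site $p\in S$ missing from our reconstructed $S'$ would have a Voronoi vertex of $V(S')$ witnessing its omission via a nearest-neighbor query, contradicting the termination condition. The only new thing to verify is that the auxiliary probes used to pin down the exact location of a newly discovered site $q_1$ work as advertised: the triangle-inequality argument given before Lemma~\ref{lem:sweep-dist} shows that a probe $p_2$ with $d(p_1,p_2) < (e-d)/2$ cannot return any previously known site, so after two returns of $q_1$ the exact location lies at one of two circle intersections, and a third probe distinguishes them. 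This is the only structural difference from Type-1, and it is already charged in Lemma~\ref{lem:sweep-dist}'s $3(n-k)$ count.

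For the running time, I would run the same sweep-line-with-backtracking algorithm from Section~3, implemented on top of the fully retroactive comparison-based dictionary of Theorem~1. Backtracking can only be triggered by the discovery of a previously unseen site, so at most $n$ backtracks occur, each creating $O(1)$ retroactive updates or queries; combined with the $O(n)$ sweep events, we perform $O(n)$ operations on the retroactive dictionary, each costing $O(\log^2 n)$ worst-case query time or $O(\log n)$ amortized update time. All other per-event bookkeeping (locating probes, computing intersections of probe circles, determining the auxiliary radius $(e-d)/2$) is $O(1)$ per event. Hence the total running time is $O(n\log^2 n)$.

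The main obstacle in writing this out is the probe accounting: one must ensure that every probe made anywhere in the algorithm is classified into exactly one of the categories counted by the two lemmas, even across backtracking. The key observation that makes this work is that backtracking only reinserts events into the queue without issuing additional probes; probes are issued only when an event is processed, and each such probe either verifies an edge above a site, verifies a tentative Voronoi vertex, or is one of the three probes that discovers and then localizes a new site. Once this accounting is checked, the $6n-6$ bound and the $O(n\log^2 n)$ time follow directly.
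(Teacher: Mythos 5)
Your proposal matches the paper's proof: it sums the probe bounds from Lemma~\ref{lem:init-dist} ($3k-1$) and Lemma~\ref{lem:sweep-dist} ($6n-3k-5$) to get $6n-6$, and argues the running time is asymptotically the same as the Type-1 sweep-line algorithm because each site just takes a constant factor (three) more probes to localize. The extra detail you supply on correctness and on the retroactive-dictionary cost accounting is consistent with and implicit in the paper's terser argument.
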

 
 \ifFull
\begin{proof}
The bound on the number of probes follows directly from Lemmas~\ref{lem:init-dist} 
and~\ref{lem:sweep-dist}.  Since we can compute the exact location of each
site with three probes instead of one, the run time of these algorithm
 is asymptotically equivalent to that of the previous section.
\qed
\end{proof}
\fi

\section{Label-Only Query Probes}

\ifFull
In this section we explore what information can be gained 
from the weakest of our queries (Type-3): 
given a query point $p$, the query returns only the label 
of the site $q$ in $S$ that is nearest to 
$p$, but no location or proximity information about site $q$.
\fi
Using queries 
of the third type, it is impossible to 
\emph{exactly} clone $V(S)$ or even to determine 
with certainty the value of $n=|S|$.   However even with this 
minimal query information, we construct an approximate Voronoi diagram $V(S')$ which, without \emph{explicitly} storing 
the locations of the sites in $S$, will still support later arbitrary approximate proximity queries to 
$V(S)$. 
\ifFull
That is, for a new query point $p'$---not necessarily one of the query points used to construct the 
data structure---$V(S')$ will be able to determine a site from $S$ that is within $\epsilon$ of the distance to the nearest neighbor of $p'$, where $\epsilon$ can be made arbitrarily small.

We describe first an approach that uses a common 
geometric search structure, the quadtree, and then show how we can improve upon these results
and compute a Voronoi diagram.  Neither approach is guaranteed to find all the sites in $S$,
but both will 
distinguish between any two sites separated by at least $\epsilon$.
Let $N$ be the number of sites found. The quadtree approach uses $O(N \frac{1}{\epsilon})$ queries
and time, while our approach of cloning an approximate Voronoi diagram requires 
 $O(N \log(\frac{1}{\epsilon}))$ queries and 
 $O(N (\log N + \log(\frac{1}{\epsilon})))$ time. 
 So the Voronoi diagram cloning approaches 
 reduce the multiplicative factor from $\frac{1}{\epsilon}$
 to $\log(\frac{1}{\epsilon})$.

 First, however, we show the impossibility of constructing an exact clone.
 
\subsection{Privacy Under Label-Only Queries}
The main result of this subsection is the following lemma, 
the proof of which is straightforward
but is included here for completeness.

\begin{lemma}
Given a set $S$ of $n$ planar points
in a bounding box of area greater than $0$, 
using any finite number of queries $p$ that return only the unique label $q$ of the closest site 
(or sites) in $S$ to $p$, it is impossible to determine with certainty the value of $n=|S|$.
\end{lemma}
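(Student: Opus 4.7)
The plan is to show that for any algorithm $A$ which outputs a guess for $|S|$ after a finite number of Type-3 queries, there exist two configurations $S_1$ and $S_2$ with $|S_1| \ne |S_2|$ that produce identical query/response transcripts under $A$. Since $A$'s output depends only on its transcript, $A$ must return the same value on both configurations, and therefore must be wrong on at least one of them.

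The core construction is an adaptive adversary. When $A$ issues a new (previously unseen) query point $q$, the adversary invents a fresh label $\ell$ and returns it; when $A$ repeats a previous query, the adversary replays the earlier response. After $A$ halts, let $z_1, \ldots, z_m$ be the distinct query locations and $\ell_1, \ldots, \ell_m$ the labels the adversary assigned to them. Define $S_1 = \{z_1, \ldots, z_m\}$ with each $z_i$ labeled $\ell_i$; then for every query $z_i$, the nearest site in $S_1$ is $z_i$ itself at distance $0$, so the label returned is exactly $\ell_i$, matching the adversary. Next, define $S_2 = S_1 \cup \{x\}$ for any $x \in B \setminus S_1$, assigning $x$ a new fresh label. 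Such an $x$ exists because $B$ has positive area while $S_1$ is finite. In $S_2$, each query $z_i$ still has $z_i$ as its unique nearest site (distance $0$, with $x$ strictly farther), so the response at $z_i$ is again $\ell_i$. The transcripts on $S_1$ and $S_2$ are identical, but $|S_1| = m \ne m+1 = |S_2|$.

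The only step to verify is the self-consistency of the adversary's responses with both exhibited configurations, and this is immediate: each configuration places a site exactly at every queried location, trivially making that location its own nearest site at distance zero, and the extra site $x$ of $S_2$ is at strictly positive distance from every query and therefore alters no response. I do not anticipate a harder obstacle; the positive-area hypothesis on $B$ is used only to guarantee that $B \setminus S_1$ is non-empty, which is the single nontrivial ingredient.
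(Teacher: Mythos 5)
Your proof is correct, and it takes a genuinely different route from the paper's. The paper gives a geometric hiding argument: given the finite set of query points, pick a disc $d$ inside $B$ that contains none of them, place a site $q_0$ at its center and six sites $q_1,\ldots,q_6$ on its boundary at angular spacing $\pi/3$; then every query point, being outside $d$, lies strictly closer to one of the ring sites than to $q_0$, so the transcript is identical whether or not $q_0$ is present, and $n$ cannot be pinned down. You instead make the adversary explicit: answer each new query location with a fresh label, and at the end exhibit two witnesses $S_1$ (a site at every distinct query location) and $S_2=S_1\cup\{x\}$, both consistent with the whole transcript. This bypasses the hexagonal shielding construction and makes the treatment of adaptivity cleaner --- you never have to worry about whether the algorithm's future queries would change under a modified site set, because consistency is verified against the entire transcript at the end. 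The paper's version buys a slightly stronger statement (the hidden site can be placed anywhere an unqueried disc exists, which the paper exploits in the subsequent lemma about hiding an arbitrarily large fraction of $S$); yours is the more economical proof of the statement as literally given.

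One technicality to repair: you place a site at every queried location and therefore need $S_1\subseteq B$, i.e., every query point must lie in the bounding box. The model does not forbid queries outside $B$, and such a query point cannot host a site. The paper's construction is immune to this because the disc $d$ is chosen inside $B$ no matter where the queries fall. You can fix this either by arguing that queries outside $B$ may be ignored without loss of generality, or by falling back on the paper's device for one step: pick an unqueried disc in $B$, answer in-box queries with fresh labels as you do, answer out-of-box queries with the label of some fixed site, and add the shielded hidden point in the disc to pass from $S_1$ to $S_2$. This is a small patch, not a structural defect in your argument.
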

 \textbf{proof:} Since the set of query points is finite, there must be non-empty areas of the plane
(or bounding box) containing no query points.  Let $d$ be any such disc containing no query point.
Suppose the set $S$ has a site $q_{0}$ in the center of $d$, and sites $q_{1},q_{2},\ldots,q_{6}$
on the boundary of $d$ at intervals of $\pi / 3$. Any query outside of $d$---which by
assumption includes all the query points---will be closer to one of  $q_{1},q_{2},\ldots,q_{6}$
than to $q_{0}$ and thus no query point will discover $q_{0}$.

In fact by spacing $7$ sites around the disc $d$, we can ``hide'' an arbitrary number of sites 
inside of $d$ that will not be discovered by any query. This leads to the following even
stronger lemma.

\begin{lemma}
Given a set $S$ of $n$ planar points 
in a bounding box of area greater than $0$, 
using any finite number of queries $p$ that return only the unique label $q$ of the closest site 
(or sites) in $S$ to $p$, let $N$ be the number of distinct sites returned by the queries.
It is possible that $N/n < \epsilon$ for any positive constant $\epsilon$.
\end{lemma}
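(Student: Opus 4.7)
The plan is to extend the previous lemma's construction directly: keep the six boundary decoys on an empty disc $d$, but now hide an arbitrarily large number of sites inside $d$ rather than just one, so that $n$ grows without bound while $N$ stays bounded by six. Since the query set is finite and the bounding box has positive area, the previous lemma already supplies a disc $d$ of positive radius $r$ (centered at some $c$) inside the bounding box that contains no query point, together with six decoy sites $q_1,\ldots,q_6$ placed at angular intervals of $\pi/3$ on $\partial d$. The new step is to additionally place $m$ hidden sites $h_1,\ldots,h_m$ anywhere inside a concentric sub-disc of $d$ of radius $\rho$, where $\rho$ is a sufficiently small constant fraction of $r$ chosen independently of $m$. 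This gives $n=6+m$ with $m$ a free parameter.

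The heart of the argument is a uniform geometric inequality showing that for every query point $p$ outside $d$, some decoy $q_i$ is strictly closer to $p$ than every hidden $h_j$, so that no query response can ever be a hidden label. Writing $R=|p-c|\ge r$, the worst-case nearest decoy lies at angular separation at most $\pi/6$ from the ray $cp$, so by the law of cosines $\min_i|p-q_i|\le\sqrt{R^2+r^2-Rr\sqrt{3}}$, while the triangle inequality gives $\min_j|p-h_j|\ge R-\rho$. The main step is then to verify $\sqrt{R^2+r^2-Rr\sqrt{3}}<R-\rho$ uniformly for all $R\ge r$. This is where essentially all the technical work lies: after squaring, the condition reduces to a quadratic in $\rho$, and one checks that any $\rho<(1-\sqrt{2-\sqrt{3}})\,r\approx 0.482\,r$ suffices, by examining both the near-boundary regime $R=r$ (where the decoy bound is about $r\sqrt{2-\sqrt{3}}\approx 0.518\,r$) and the far regime $R\to\infty$ (where the decoy bound grows like $R-r\sqrt{3}/2$).

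With that inequality in hand, every response to any query in the given finite set must be the label of one of $q_1,\ldots,q_6$, so $N\le 6$ regardless of $m$. Taking any $m>6/\epsilon-6$ then yields $N/n\le 6/(6+m)<\epsilon$, as required. The only genuine obstacle is the uniform-in-$R$ geometric inequality above; everything else is straightforward bookkeeping on top of the previous lemma's setup.
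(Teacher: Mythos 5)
Your proof is correct and pursues the same construction the paper sketches: extend the previous lemma by hiding arbitrarily many sites inside the query-free disc $d$, behind a fixed ring of decoys on $\partial d$, so that $N$ stays bounded while $n$ grows without bound. The paper, however, gives no actual proof of this lemma---it offers only the one-line remark that ``by spacing 7 sites around the disc $d$, we can hide an arbitrary number of sites inside of $d$ that will not be discovered by any query.'' Your law-of-cosines bound $\min_i|p-q_i|\le\sqrt{R^2+r^2-Rr\sqrt{3}}$ for the nearest decoy, the triangle-inequality lower bound $R-\rho$ on the distance to any hidden site, and the observation that after squaring the comparison is \emph{linear} in $R$ (so it suffices to check $R=r$, where it reduces to $\rho^2-2r\rho+r^2(\sqrt{3}-1)>0$, i.e.\ $\rho<(1-\sqrt{2-\sqrt{3}})\,r$) supply precisely the geometric verification the paper leaves out; you also get by with six decoys rather than the paper's seven, a small improvement. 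One minor observation: because the previous lemma's inequality at $\rho=0$ is strict with slack $r^2(\sqrt{3}-1)$ at the worst case $R=r$, a continuity argument already yields the existence of \emph{some} workable $\rho>0$ without solving the quadratic, but your explicit constant is cleaner and makes the bound concretely checkable.
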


That is, the number of sites in $S$ discovered by queries---and included in any cloned
Voronoi diagram---no matter how many queries are used, could be any arbitrary small  fraction 
of the actually number of points in $S$. 
Phrased the other way, we could miss any arbitrary large 
percentage of the sites in $S$. The following corollary also follows directly from this.

\begin{corollary}
Given a set $S$ of $n$ planar points 
in a bounding box of area greater than $0$, 
it is impossible to exactly compute (or clone) $V(S)$ using any finite number of queries $p$ that 
return only the unique label $q$ of the closest site  (or sites) in $S$ to $p$.
\end{corollary}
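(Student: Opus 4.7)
The plan is to derive the corollary directly from the preceding lemma by a straightforward indistinguishability argument. Suppose, for contradiction, that there exists some (possibly adaptive) algorithm $\mathcal{A}$ that, using only finitely many label-only queries, is guaranteed to output an exact representation of $V(S)$ for every finite planar point set $S$ in a bounding box of positive area.

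First, I would fix an adversarial instance and run $\mathcal{A}$ on it. Let $S$ be any finite point set for which $\mathcal{A}$ terminates after issuing a finite sequence of query points $p_1, p_2, \ldots, p_k$ and returns some Voronoi diagram $V(S)$. Since $\{p_1,\ldots,p_k\}$ is a finite set and the bounding box has positive area, there is an open disc $d$ inside the box that contains none of the $p_i$, exactly as in the proof of the preceding lemma. Next, I would apply the hiding construction from that lemma: place seven "shield" sites around the boundary of $d$, and then hide one additional site $q_0$ at the center of $d$. By the argument of the previous lemma, every query point outside $d$ (in particular every $p_i$) is strictly closer to one of the shield sites than to $q_0$, so the label returned at each $p_i$ under the augmented instance $S' = S \cup (\text{seven shields}) \cup \{q_0\}$ depends only on the shields and on $S$, not on $q_0$.

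I would then construct a companion instance $S'' = S \cup (\text{seven shields})$, identical to $S'$ except that the hidden site $q_0$ is omitted. The responses of $S'$ and $S''$ to the queries $p_1,\ldots,p_k$ are identical, so $\mathcal{A}$ must produce the same output in both cases. However, $V(S')$ and $V(S'')$ are genuinely distinct: because $q_0$ lies strictly inside $d$ and the shields lie on its boundary, $q_0$ has a non-empty Voronoi cell in $V(S')$, so $V(S')$ contains a face not present in $V(S'')$. Hence $\mathcal{A}$ cannot be correct on both instances, a contradiction.

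The main obstacle, such as it is, is just making sure the seven-shield construction from the lemma genuinely creates two different Voronoi diagrams that are label-indistinguishable to the algorithm's queries; but this is exactly the content of the preceding lemma, and $\mathcal{A}$'s adaptivity does not help since the hiding disc $d$ is chosen after observing the full (finite) list of queries it produces on $S$. Strengthening the corollary to rule out approximate cloning is not needed here, since the claim is only about exact reconstruction.
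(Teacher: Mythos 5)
Your overall strategy---build two label-indistinguishable instances with different Voronoi diagrams via the seven-shield construction from the preceding lemma---is the approach the paper has in mind, but the indistinguishability step as written has a genuine gap. You run $\mathcal{A}$ on $S$ to obtain queries $p_1,\ldots,p_k$, choose a disc $d$ avoiding them, and then compare $\mathcal{A}$'s behavior on $S' = S \cup \{\text{shields}\} \cup \{q_0\}$ and $S'' = S \cup \{\text{shields}\}$, asserting that because the responses to $p_1,\ldots,p_k$ agree on $S'$ and $S''$, the algorithm must produce the same output. But the queries $\mathcal{A}$ actually issues when run on $S'$ or $S''$ need not be $p_1,\ldots,p_k$ at all: the shields may become the nearest sites to some $p_i$, so the response there differs from the response under $S$, and an adaptive $\mathcal{A}$ may then issue different subsequent queries---some of which could land inside $d$, where $S'$ and $S''$ do give different answers. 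Your remark that ``adaptivity does not help since $d$ is chosen after observing the full list of queries on $S$'' does not close this, because the list that matters is the list of queries made on $S'$ (equivalently $S''$), not on $S$, and you have no control over that list.

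The derivation the paper intends by ``follows directly'' is more elementary and avoids the issue entirely: the planar subdivision $V(S)$ has exactly $n=|S|$ bounded faces inside the box, so any procedure that exactly reconstructs $V(S)$ also determines $n$ by counting faces; the preceding lemma says $n$ cannot be determined with certainty from finitely many label-only queries, and we are done. If you prefer to keep a two-instance construction, you must arrange that inserting the new sites changes no response along $\mathcal{A}$'s actual run: e.g., compare $S''$ (the set on which $\mathcal{A}$ is run, producing queries $q_1,\ldots,q_m$) with $S' = S'' \cup \{q_0\}$, choosing $d$ after the fact so that $\mathrm{dist}(q_i,d) > \mathrm{dist}(q_i,S'')$ for all $i$; then a straightforward induction on the query index shows $\mathcal{A}$'s transcripts on $S'$ and $S''$ coincide, while $V(S')$ has one more face than $V(S'')$.
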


\subsection{A Quadtree Approximation Approach}
Before looking at how to explicitly construct an approximate Voronoi diagram from queries
of the third sort, we note that we could also use queries to construct a standard quadtree 
that would support $\epsilon$-approximate nearest neighbor searches, and could be used
to construct an $\epsilon$-approximate Voronoi diagram.

The algorithm for constructing a quadtree using queries on the set of unknown sites is
very similar to the standard algorithm for constructing a quadtree for a known set of sites. 
We begin with a bounding box, and query each of its corners, labeling them with the
nearest site from $S$ returned by our query. If all four corners are labeled with the 
same nearest site $q \in S$, then by convexity of Voronoi regions the entire box is inside the 
Voronoi region for $q$---or, alternately, $q$ is the nearest neighbor in $S$ to every point 
in the box. 

If the corners of the bounding box have more than one label, we recursively subdivide it
by making five new queries: one on the center of each side, and one in the center of the rectangle.  
The box now has $9$ labels from queries, which can be used as the $4$ corners of 
$4$ separate boxes. The original box has a pointer to each of its four child boxes: upper left,
upper right, lower left and lower right. Any of these smaller boxes that has 
corners with different labels gets recursively subdivided again into four children.
This process will continue indefinitely. 
We stop when any bounding box has a diagonal
of length less than a fixed constant $\epsilon$.

Unfortunately, the run-time, depth of the quadtree, and number of queries required, are all inversely 
related to $\epsilon$ rather than to the number of points in $S$, and the algorithm can never determine
with certainly the size $n$ of the set $S$. 
The number of probes and the time to construct the tree is $O(N\frac{1}{\epsilon})$, where $N\le n$ is the number of sites discovered by the algorithm.
The depth of the quadtree we construct is $O (\log (\frac{1}{\epsilon}))$. 

\subsection{A Voronoi Diagram Approach}
\fi
We now show that an approximate Voronoi diagram can be constructed to 
answer nearest-neighbor queries, with a probing process
that uses $O(N \log ({1}/{\epsilon}))$ queries and 
$O(N (\log N  + \log (1/\epsilon)))$ time,
where $N\le n$ is the number of discovered
sites in $S$. (Any two sites separated by at least $\epsilon$ will be distinguished and discovered.)
The main idea of the algorithm is to build an approximation to the Voronoi cell of each known site, using $O(\log(1/\epsilon))$ queries per feature of the cell. This sequence of queries either finds a sufficiently accurate approximation for the location of that feature or discovers the existence of another site label.
We begin by querying each corner of our bounding box to find the label of the site in whose region
that corner belongs. For any side of the box whose corners are in different Voronoi regions, 
we do a binary search to find, within a distance of $\epsilon^{2}$,  the edge of the Voronoi region for 
each different site. This may discover new Voronoi regions. For each new region discovered,
we also do a binary search to discover its edges to within $\epsilon^{2}$.  
Each binary search requires  $O (\log\frac{1}{\epsilon})$ queries and time. 
The result is an ordered list of Voronoi edges crossing each side of the bounding
box.

A second similar search a distance of $2 \epsilon$ from each side
of the bounding box will find the same Voronoi edges---or will
discover a new Voronoi region, indicating that the Voronoi edge has
ended.  For those Voronoi edges that have not ended within $2 \epsilon$, 
we compute an approximation of the line containing the
Voronoi edge---that is, the perpendicular bisector of the two sites
whose labels we know. An argument using similar triangles shows that our approximation
of this edge is accurate enough that we can determine to within a
distance of $< \epsilon$ where this edge crosses the far boundary.
We do a doubling search of queries out along each discovered Voronoi edge, 
and then a binary search back once we have moved past the end of the edge, to
find where it ends.  
\ifFull
(For those edges ending within $2 \epsilon$ of
the edge of the box, we use a constant number of queries in a circle of
radius $\epsilon$ around the edge to find where it ends, or to
determine that the edge is too short to be included in our
approximation.)
\fi
Thus, three binary searches of $O (\log (\frac{1}{\epsilon}))$ queries
and time each suffice to discover complete approximations of each
Voronoi edge intersecting the bounding box, including an approximate
location of the Voronoi vertex terminating these edges.  In the
worst case, our approximation is within $\epsilon$.
A constant number of queries in the vicinity of each Voronoi vertex
will discover the other edge or edges coming out of the Vertex. We
repeat this process for each new Voronoi edge as it is discovered,
until every Voronoi edge has both ends terminated at Voronoi vertices,
at which time the approximate Voronoi diagram is complete.

\begin{theorem}
Given a set $S$ of $n$ points in $\R^2$, we can construct a planar
subdivision, $V'$, using $O(N\log\frac{1}{\epsilon})$ Type-3 queries and 
$O(N(\log N+\log\frac{1}{\epsilon}))$ time,
where $N<n$ is the number of discovered
sites in $S$, such that any two sites separated by 
at least $\epsilon$ will be distinguished and discovered and each
point on the 1-dimensional skeleton of $V$ is within distance $\epsilon$ of 
a point on the 1-dimensional skeleton of the Voronoi diagram, $V(S)$, of $S$.
\end{theorem}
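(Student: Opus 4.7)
The plan is to directly analyze the algorithm sketched above, charging queries and time to the features of the output subdivision and bounding the geometric error with a similar-triangles argument. The algorithm discovers $V'$ feature-by-feature: labels of Voronoi regions at the four corners of the bounding box, then edges separating labeled regions, then vertices where edges terminate.

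First I would verify the query counts. For each side of the bounding box whose two endpoints carry different labels, a binary search of depth $\lceil \log(1/\epsilon^2)\rceil = O(\log(1/\epsilon))$ pins down the crossing of a Voronoi edge with that side to within $\epsilon^2$; every such search either locates a boundary crossing or discovers a new site label, and there are only $O(N)$ boundary crossings and labels to discover since there are only $O(N)$ Voronoi regions to find. For each discovered edge, three more $O(\log(1/\epsilon))$-query searches suffice: the offset search at distance $2\epsilon$ that decides whether the edge continues, the doubling search along the approximated bisector that brackets the far endpoint of the edge, and the binary search that localizes it. The constant-query sweep around each tentative vertex to pick up additional incident edges contributes only $O(1)$ per vertex. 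Since the final subdivision $V'$ has $O(N)$ edges and vertices, the total query count is $O(N\log(1/\epsilon))$. The extra $O(N\log N)$ term in the time bound comes from maintaining a search structure on the already-discovered portion of $V'$ so that each new edge or vertex can be attached in $O(\log N)$ time.

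Next I would establish the accuracy guarantee. For each edge of $V(S)$ that actually crosses the bounding box, the two binary searches on the boundary and at offset $2\epsilon$ produce two points each within $\epsilon^2$ of the true Voronoi edge (the perpendicular bisector of the two incident sites). The line through these two points is therefore an approximation of the true bisector whose angular deviation is $O(\epsilon^2 / \epsilon) = O(\epsilon)$; extending along this approximate bisector across a bounding box of diameter $O(1)$, a similar-triangles calculation gives lateral error $O(\epsilon)$ at the far end. This is the step I expect to be the main obstacle: one must choose the inner offset and the binary-search precisions carefully so that the accumulated error at the far endpoint of a long edge is still at most $\epsilon$, and so that termination detection at a Voronoi vertex is not spuriously triggered or missed. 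The doubling search along the approximate bisector likewise has to be shown to straddle the true vertex, after which a binary search recovers its location to within $\epsilon$.

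Finally I would argue correctness of the discovery invariant: any two sites in $S$ whose Voronoi regions both meet the bounding box and whose mutual separation is at least $\epsilon$ give rise to a bisector whose intersection with some boundary segment between known labels is detected by the binary search, so both sites are discovered and their separating edge is approximated. Edges of $V(S)$ between already-discovered sites are located by the recursive expansion from vertices. The process terminates because each discovered edge accounts for $\Theta(\log(1/\epsilon))$ queries and the total number of edges and vertices is $O(N)$, giving the claimed bounds and completing the proof. \qed
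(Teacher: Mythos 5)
Your proposal tracks the paper's own argument very closely: the paper does not give a separate formal proof but instead presents the same algorithm (corner queries, binary search to $\epsilon^2$ on the boundary, offset search at $2\epsilon$, doubling search plus binary search along each approximate bisector, $O(1)$ queries around each tentative vertex), charges $O(\log\frac{1}{\epsilon})$ queries to each of the $O(N)$ features, and invokes the same similar-triangles bound on the lateral error of the extrapolated bisector. Your filling-in of the $O(N\log N)$ time term via a search structure on the partially built subdivision, and your flag that the paper's error-propagation step is where care is needed, are both reasonable glosses on what the paper leaves implicit, but they do not change the route.
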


\ifFull
\section{Conclusion and Open Problem}
We have given a number of efficient algorithms for cloning Voronoi
diagrams, under various assumptions about the types of queries that
are allowed.
Our methods improve those that would be implied by using existing
polytope probing strategies, with several being based on the use of a novel
retroactive dictionary implementation.
We leave as an open problem whether it is possible to efficiently
clone a Voronoi diagram given only approximate distance queries.
\fi

{
\raggedright
\bibliographystyle{abbrv}
\bibliography{geom,goodrich,clone,k_anonymity}
}

\ifFull
\clearpage
\begin{appendix}
\section{Algorithmic Details for Method with Type-1 Queries}

In this appendix we provide more detail to our algorithm to exactly 
clone a Voronoi diagram using Type-1 queries,
following the de~Berg {\it et al.}~\cite{bkos-cgaa-97} presentation of 
Fortune's sweep-line algorithm~\cite{f-savd-87}, which is based on using a
beach line made up of parabolic arcs instead of hyperbolic arcs 
(readers are referred to de~Berg {\it et al.}~for the complete
details). 

The data structures consist of the Voronoi 
diagram $V(S)$ itself that begins empty 
and is constructed 
during the sweep, a priority queue $Q$ of events (allowing for deletion of arbitrary events), 
and a retroactive dictionary, $T$, 
to keep track of the active sites in the beach line at each time of the plane
sweep. 
For each time value (which is determined by a $y$-coordinate of a point event
or Voronoi vertex event),
$T$ is searchable by $x$-coordinate at any time for which the beachline
exists.
Our algorithm makes queries into a 
\emph{black box} that has complete
information about $S$ and represents the actions of the
data owner, Alice.  The algorithm follows:

\begin{itemize}
\item \textbf{Initialization}:  Using as initial queries the top left and top right corners of the bounding box, $p_{l}$ and $p_{r}$ respectively, determine the nearest neighbors in $S$ to $p_{l}$ and $p_{r}$. 
Let the results of these queries be sites $q_{l}$ and $q_{r}$ respectively. 
If  $q_{l}=q_{r}$---that is, both corners $p_{l}$ and $p_{r}$ have the same nearest neighbor 
$q=q_{l}=q_{r} \in S$---then only one Voronoi region $V(q)$ (and no Voronoi edges) 
intersects the top of the bounding box. Add site $q$ to the event queue $Q$, terminate
the \textbf{Initialization} stage, and go to the \textbf{Plane Sweep} stage of the algorithm. 

\textbf{Initialization Recursive Step}: If $q_{l} \neq q_{r}$---that is the two corners are in different Voronoi 
regions---then we recursively subdivide as follows.  Put sites $q_{l}$ and $q_{r}$ in $Q$.  
Then compute (in $O(1)$ time) the perpendicular bisector $E$ of the segment $q_{l}q_{r}$ 
on which lies the Voronoi edge between regions $V(q_{l})$ and $V(q_{r})$, \emph{if} these
regions share an edge. Let $p_{m}$ be the intersection of $E$ with the top of the bounding
box. (This intersection must exist, since the two corners are in different Voronoi regions.)  
Query $p_{m}$. If $q_{l}$ and $q_{r}$ are the nearest neighbors of $p_{m}$, then we are done. 
If not, let $q_{m}$ (with $q_{m} \neq q_{l}$ and $q_{m} \neq q_{r}$) be the nearest site in 
$S$ to $p_{m}$. Add $q_{m}$ to $Q$, and make two recursive calls to this step, one with 
$p_{l}$ and $p_{m}$ taking the roles of $p_{l}$ and $p_{r}$ above, and one with 
$p_{m}$ and $p_{r}$ taking the roles of $p_{l}$ and $p_{r}$.

At the end of the initialization, every site whose Voronoi region intersects the top
of the bounding box will be in the event queue. We now carry out the following plane 
sweep using the event queue.

\item \textbf{Plane Sweep}
 \begin{algorithmic}[100]
\WHILE{the event queue $Q$ is not empty}
\STATE Remove the event with the largest $y$-coordinate.
\STATE It is either a SiteEvent $p_{i}$  or a TentativeVoronoiVertex $q_{i}$.  
\IF{the event is a site event}
		\STATE HandleSiteEvent($p_{i}$) 
\ELSE
		\STATE HandleTentativeVoronoiVertex($q_{i}$)  
\ENDIF
\ENDWHILE
\end{algorithmic}

The difference between this and the standard Fortune plane sweep is how the events are
handled, as described below.

\item \textbf{HandleSiteEvent($p_{i}$)}:
\begin{algorithmic}[100]
\STATE Search in $T$ for the site $p_{j}$ whose region contains the $x$-coordinate of $p_{i}$
when the sweep-line is at the $y$-coordinate of $p_{i}$. (Since $p_{j}$ is already in $T$, it
must have a $y$-coordinate larger than that of $p_{i}$.)
\STATE To verify that we are ready to process $p_{i}$---that its tentative Voronoi edge directly
above it is an actual Voronoi edge, and we did not miss any site between $p_{i}$ and 
$p_{j}$---compute the point $q$ on the bisector of $p_{i}p_{j}$ that has the
same $x$-coordinate as $p_{i}$. This would be a point on the Voronoi edge between $p_{i}$ and
$p_{j}$ if that edge exists.  
\STATE Query point $q$ to find its nearest neighbor(s) in $S$. 
\IF{the two nearest neighbors of $q$ are $p_{i}$ and $p_{j}$}
		\STATE Insert $p_{i}$ into the retroactive dictionary, $T$.
		\STATE Add the verified edge $p_{i}p_{j}$ to $V(S)$
		\STATE Complete processing this as a 
		site event following the details of \cite{bkos-cgaa-97}. This includes c
		constructing tentative Voronoi vertices from the edge $p_{i}p_{j}$ and its 
		neighbors, and adding them to $Q$ if not already present. 
		\STATE Delete $p_{i}$ from $Q$. 
\ELSE
		\STATE Leave $p_{i}$ in $Q$
		\STATE Let $p_{k}$ be the nearest neighbor of $q$.  
		\STATE Add $p_{k}$ to $Q$.   We then continue processing. Note that if $p_{k}$ has
		a higher $y$-coordinate than $p_{i}$, then we have to backtrack our sweep line.
\ENDIF
\end{algorithmic}

We next describe how to handle a tentative Voronoi vertex (see
Fig.~\ref{fig:event2}).

\begin{figure}[hbt]
\begin{center}
\includegraphics[width=3.5in]{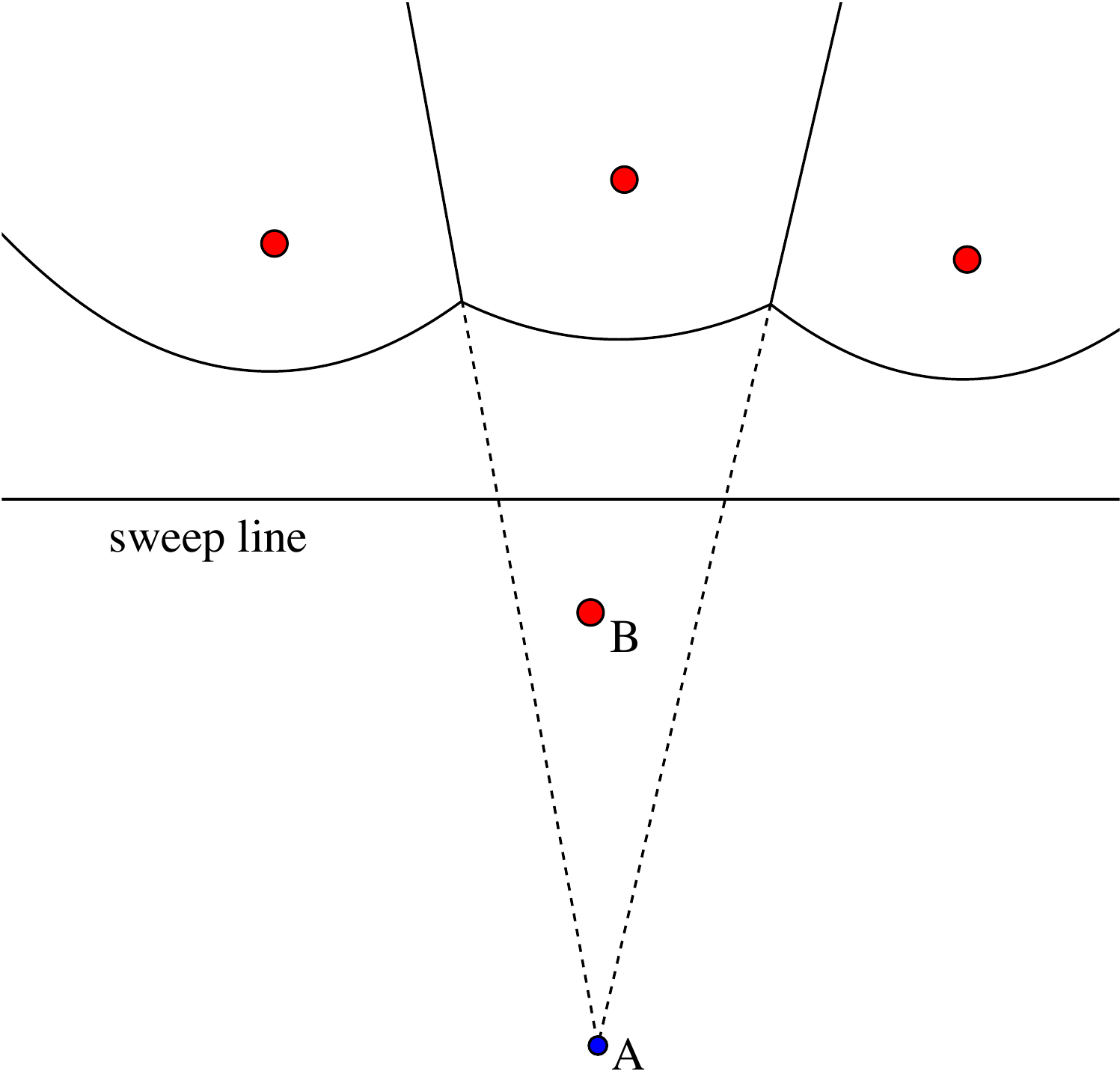}
\caption{A Voronoi vertex event (A) that discovers a new site (B).}
\label{fig:event2}
\end{center}
\end{figure}

\item \textbf{HandleTentativeVoronoiVertex($q_{i}$)}:
\begin{algorithmic}[100]
\STATE Query point $q_{i}$ to determine if its three actual nearest neighbors are the
three sites whose Voronoi regions generated this tentative Voronoi vertex.
\IF{the $q_{i}$ has the expected three nearest neighbors}
		\STATE Perform a deletion in $T$ 
		for beachline segment that ends at this vertex
		at this time in the sweep 
		\STATE Add to $V(S)$ the Voronoi vertex $q_{i}$ and the edges above it.
		\STATE Delete $q_{i}$ from $Q$. 
		\STATE Complete processing this as a 
		CircleEvent following the details of \cite{bkos-cgaa-97}.
\ELSE
		\STATE Delete $q_{i}$ from $Q$.  
		\STATE Add to $Q$ all of the sites that are nearest neighbors of query point $q_{i}$ 
		and continue.
\ENDIF
\end{algorithmic}

\end{itemize}

\end{appendix}
\fi

\end{document}